\newcommand{\rr}{\mbox{\boldmath $r$}}
\newcommand{\xx}{\mbox{\boldmath $x$}}
\newcommand{\uu}{\mbox{\boldmath $u$}}
\newcommand{\pp}{\mbox{\boldmath $p$}}
\newcommand{\yy}{\mbox{\boldmath $y$}}
\newcommand{\al}{\mbox{$\alpha$}}
\newcommand{\bt}{\mbox{$\beta$}}
\newcommand{\ra}{\rightarrow}
\newcommand{\Q}{\mbox{\rm\bf Q}}
\newcommand{\Qplus}{\Q^+}
\newcommand{\maximize}{\mbox{\rm maximize }}
\newcommand{\st}{\mbox{\rm subject to }}
\newcommand{\del}{\partial}
\newcommand{\R}{\mbox{\rm\bf R}}
\newcommand{\Rplus}{\R_+}
\newcommand{\implies}{\mbox{${\Rightarrow}$}}
\def\fnum@figure{{\bf Figure \thefigure}}
\def\fnum@table{{\bf Table \thetable}}
\long\def\@mycaption#1[#2]#3{\addcontentsline{\csname
 ext@#1\endcsname}{#1}{\protect\numberline{\csname
  the#1\endcsname}{\ignorespaces #2}}\par
     \begingroup
       \@parboxrestore
          \small
       \@makecaption{\csname fnum@#1\endcsname}{\ignorespaces
#3\endgroup}
      }
\begin{document}

\title{Non-Separable, Quasiconcave Utilities are Easy -- in a \\
Perfect Price Discrimination Market Model\thanks{College of Computing,
Georgia Institute of Technology, Atlanta, GA 30332--0280.
Email: {\sf vazirani@cc.gatech.edu}}
}

\author{
{\Large\em Vijay V. Vazirani}\thanks{Research supported by NSF Grants CCF-0728640 and CCF-0914732, ONR Grant N000140910755, 
and a Google Research Grant.}
}

\date{}
\maketitle

\begin{abstract}
Recent results, establishing evidence of intractability for such restrictive utility functions as additively separable,
piecewise-linear and concave, under both Fisher and Arrow-Debreu market models, have prompted
the question of whether we have failed to capture some essential elements of real markets, which seem to
do a good job of finding prices that maintain parity between supply and demand.

The main point of this paper is to show that even non-separable, quasiconcave utility functions
can be handled efficiently in a suitably chosen, though natural, realistic and useful, market model; our model allows for perfect price discrimination.
Our model supports unique equilibrium prices and, for the restriction to concave utilities, satisfies both welfare theorems.
\end{abstract}

\newpage

\section{Introduction}

The celebrated Arrow-Debreu theorem \cite{AD}, which establishes the existence of equilibria in a very general
model of the economy, has been deemed to be ``highly non-constructive'' since it crucially uses Kakutani's fixed point
theorem; as shown by Uzawa \cite{Uzawa}, the existence of general equilibrium is equivalent to fixed point 
theorems. The conditions imposed on utility functions of buyers in the Arrow-Debreu theorem
are very weak: continuity, quasiconcavity, and non-satiation. 

Over the last decade, there has been a surge of interest within theoretical computer science on studying the question of efficient
computability of market equilibria -- not only to provide an algorithmic ratification of Adam Smith's ``Invisible hand of the market'' 
but also because of potential applications to new markets on the Internet. This study started with highly restricted utility functions, 
i.e., linear \cite{DPSV,JainAD}, before moving to more general functions.
Several specialized classes of non-linear utility functions were also solved, e.g., Cobb-Douglas, CES and Leontief, see \cite{CV,Ye,CV.chap6}.

Regarding attacking general classes of utility functions, the obvious next question was additively
separable, piecewise-linear, concave utility functions (plc utilities). This long-standing open
problem was resolved recently for both market models. First, \cite{Chen.plc} proved PPAD-hardness
for the Arrow-Debreu model under plc utilities. Subsequently, PPAD-hardness was established
for Fisher's model, independently and concurrently by \cite{ChenTeng,VY.plc} and \cite{VY.plc}. Membership in PPAD
was established for both models in \cite{VY.plc}, hence precisely pinning down the complexity of plc utilities.

These results dealt a serious blow to the program of algorithmically ratifying the ``Invisible hand of the market" -- assuming 
P $\neq$ PPAD, these results effectively rule out the existence of efficient algorithms for almost all general and interesting classes 
of ``traditional'' market models.
On the other hand, markets in the West, based on Adam Smith's free market principle, seem to do a good job of finding prices
that maintain parity between supply and demand\footnote{For example, in the West, it is hard to see a sight that was commonplace in the Soviet Union,
with massive surpluses of some goods and empty shelves of others.}. This has prompted the question of whether we have failed to
capture some essential elements of real markets in our models, see \cite{va.spending}. Some progress has been made on this latter question:
polynomial time algorithms were given for spending constraint utilities \cite{va.spending} and for plc
utilities in the Fisher model, provided perfect price discrimination is introduced in the model \cite{GV.pd}. Both these works deal with
additively separable utility functions. 

We remark that
convex programs have turned out to be {\em the tool} \ for efficient computation of equilibria. For several utility functions, equilibria
of the corresponding markets have been captured as optimal solutions of convex programs, which have been solved by either combinatorial 
algorithms (see \cite{va.NB}) or using continuous algorithms, e.g., the ellipsoid method \cite{GLS}.
To our knowledge, so far all markets that have 
yielded to efficient algorithms, either exact or approximable to any degree of accuracy, have done so via this tool.

Clearly, for traditional market models, the gap between the ``positive'' algorithmic results summarized above and the generality
of the Arrow-Debreu Theorem is rather large.
The main point of this paper is to show that even non-separable, quasiconcave utility functions, with the additional restrictions of
continuous differentiability and non-satiation, can be handled efficiently in a suitably chosen, though
natural, realistic and useful, market model; our model allows for perfect price discrimination.
Our work also provides insights into the widely used practice of price discrimination, and in Section \ref{sec.appl}, we give an application 
of our market model to online display advertising marketplaces.

\subsection{Price discrimination and our results}

Most businesses today charge different prices from
different consumers for essentially the same goods or
services in order to maximize their revenues. This practice,
called {\em price discrimination}, is not only good for businesses but also
customers -- without it, some customers will simply not be able to avail of certain goods or services. 
It is not only widespread \cite{Varian.pd} but is also essential for the survival of certain
businesses, e.g., in the airline industry. 

Price discrimination is particularly important in new industries, such as telecommunications and information
services and digital goods. Traditional economic analysis, which assumes decreasing returns to scale on production,
recommends pricing goods at marginal cost. However, this is not relevant to the new industries, 
since they have very high fixed costs and low marginal costs, and hence such prices
will not even recover the fixed costs. In these situations, product differentiation and price discrimination are
an important recourse. Motivated by these considerations, price discrimination has been extensively studied in economics
from many different angles; see \cite{MacLeod.pd,Varian.pd,varian.price,arun.price,Edlin,Edlin2,bhaskar.pd} 
for just a small sampling of papers on this topic.

A monopolistic situation in which the business separates the market into individual consumers and charges
each one prices that they are {\em willing and able to pay} is called {\em perfect price discrimination}, sometimes
also called first degree price discrimination \cite{varian.price}.
More formally, a consumer's marginal willing to pay is made equal to the marginal cost of the good.
Of course, to do this, the business needs to have complete information about each consumer's preferences. 

For the restriction to concave utilities, we give a convex program, 
a generalization of the classic Eisenberg-Gale convex program, that captures equilibrium for this model.
For this case, we prove both welfare theorems. 

For quasiconcave utilities, we give a nonlinear program that captures equilibria. Similar to the convex program mentioned 
above, an optimal solution to this program also satisfies KKT conditions; moreover, this program also lends itself to a polynomial time 
solution using the ellipsoid algorithm. For this case, the first welfare theorem holds but the second welfare theorem fails;
the latter fact is established in Section \ref{sec.ex}.

\section{The Market Model}
\label{sec.model}

Our market model is based on the Fisher setting and consists of a seller with a set $G$ of divisible goods, a set $B$ of buyers each with
money and a middleman. Assume that $|G| = g$ and $|B| = n$, and the goods are numbered from 1 to $g$ and the buyers 
are numbered from 1 to $n$. Let $m_i \in \Qplus$ dollars be the money of buyer $i$. For each buyer $i$ we are specified a function
$f_i: \Rplus^g \ra \Rplus$ which gives the utility derived by $i$ as a function of allocation of goods. We will assume that $f_i$ is
polynomial time computable.

The middleman buys goods from the seller, who charges the middleman in  the usual manner, i.e., depending on the prices of goods 
and the amounts bought. However, in selling goods, the middleman charges buyers on the
basis of the utility they accrue rather than the amount of goods they receive, i.e., he price-discriminates.
The rate $r_i$ at which buyer $i$ should get utility per dollar charged from her, at any given prices $\pp$,
{\em is determined by buyer $i$ herself}. Each buyer has no utility for money but wants to maximize the utility she accrues.
The only restriction is that the middleman refuses to sell any part of a good at a loss -- the fact that the middleman knows buyers'
utility functions enables him to do this (we will specify in Section \ref{sec.rate} what this restriction means mathematically).
We show in Lemma \ref{lem.rate} that under
these circumstances, there is a rate, as a function of prices, at which buyer $i$ is able to maximize her
utility. This is also the rate at which each buyer's marginal willingness to pay equals the marginal prices of goods she gets, as
required under perfect price discrimination. At this rate $r_i$, the total utility buyer $i$ is able to get will be $r_i \cdot m_i$.

In our model, the elasticity among consumers leads to profit for the middleman; in particular, if the utility functions of all buyers are linear,
then the middleman will make no profit. 
We will study the following two cases of utility functions; clearly, the first is a special case of the second, but it has stronger properties.

\begin{itemize}
\item 
{\bf Case 1 utility functions:}
Non-separable, continuously differentiable, concave functions satisfying non-satiation.

\item 
{\bf Case 2 utility functions:}
Non-separable, continuously differentiable, quasiconcave functions satisfying non-satiation. 
\end{itemize}

We will use the following notation and definitions throughout. $\xx$ will denote allocations made of all goods to all buyers.
$\xx_i$ will denote the restriction of $\xx$ to allocations made to buyer $i$ only, and $x_{ij}$ will denote the 
amount of good $j$ allocated to buyer $i$. For the sake of ease of notation, let us introduce the following w.r.t. a generic buyer:
$\yy$, a vector of length $g$, will denote allocation and its $j$th component, $y_j$, will denote the allocation of good $j$. 
$f: \Rplus^g \ra \Rplus$ will denote her utility function.
Function $f$ is {\em concave} if for any allocations $\yy$ and $\yy^{'}$, 
\[ f\left({{\yy + \yy^{'}} \over 2}\right) \geq  {{f(\yy) + f(\yy^{'})} \over 2} .\]
Function $f$ is {\em quasiconcave} if each of its
{\em upper level sets} is convex, i.e., $\forall a \geq 0$, the set $S_a = \{ \yy \in \Rplus^g \ | \ f(\yy) \geq a \}$ is convex.
We will say that $f$ {\em satisfies non-satiation} if for any allocation $\yy$, 
there is an allocation $\yy^{'}$ that weakly dominates $\yy$ component wise and such that $f(\yy^{'}) > f(\yy)$.

The overall objective is to find prices for goods such that under these transactions, the market
clears, i.e., there is no surplus or deficiency of any good. These will be called {\em equilibrium prices}.
More formally, let $\pp$ be prices of goods and $\rr$ be the corresponding rates of buyers, as given by Lemma \ref{lem.rate}.
Assume that each buyer $i$ is charged at rate $r_i$ and is allocated a bundle of goods. We will say that prices $\pp$
are equilibrium prices if they satisfy the following {\bf conditions:}
  
\begin{enumerate}
\item 
Each good having positive price is completely sold.

\item 
The money spent by each buyer $i$ equals $m_i$.

\item
The middleman never allocates any portion of a good at a loss (the implication of this condition on allocations is given in 
Section \ref{sec.rate}).
\end{enumerate}

\subsection{Applying non-separable utilities to online display advertising marketplaces}
\label{sec.appl}

\cite{GV.pd} show how the perfect price discrimination market model can be applied to online display advertising marketplaces.
Their model assumes additively separable utilities. We show that extending the model from separable to non-separable 
utilities makes it even more relevant to this application. 

We first recall the setting from \cite{GV.pd}, which applies to companies that sell ad slots on web sites to advertisers. 
\cite{GV.pd} view such a company as the middleman, the owners of web sites as sellers and the advertisers as buyers.
They view ad slots on different web sites as goods which need to be priced. An advertiser's utility for
a particular ad slot is determined by the probability that her ad will get clicked if it is shown on that slot.
Advertisers typically pay at a fixed rate to the middleman
for the expected number of clicks they get, i.e., they are paying at fixed rate for every unit of utility they get.
Using knowledge of the utility function of buyers, the middleman is able to price discriminate.

In the model of \cite{GV.pd}, an advertiser's total utility is additive over all the slots she is allocated. 
We note that the utility to an advertiser from placing ads on multiple 
web sites would typically be an involved, non-separable function because the web sites may be substitutes, complements, etc.
Hence, extending to non-separable utilities makes the model more relevant to this application.

\subsection{Determining buyers' rates}
\label{sec.rate}

W.r.t. any prices, we will give a closed-form definition of each buyer $i$'s rate, $r_i$; for ease of notation, we will do this
for the generic buyer, i.e., we will define her rate $r^*$. For this section, assume that prices of goods are set to $\pp$.
In Lemma \ref{lem.rate} we will show $r^*$ is indeed her optimal rate, i.e., it maximizes her utility. 
In Section \ref{sec.program} we will show that the solution of the convex (nonlinear) program will
assign utilities to a buyer at precisely this rate w.r.t. equilibrium prices. Hence, there is no need to explicitly compute buyers' rates. 

Given two allocations $\yy$ and $\yy'$, we will say that $\yy$ {\em weakly dominates} $\yy'$ if for each good $j$,
$y_j \geq y_j'$. The next lemma gives the mathematical condition that an allocation needs to satisfy in order to satisfy Condition 3.

\begin{lemma}
\label{lem.condition}
An allocation $\yy$ made by the middleman at rate $r$ satisfies Condition 3 iff
\[ \forall \yy' \ s.t. \ \yy \ \mbox{weakly dominates} \ \yy' , \forall j :  \ \  \left( {{\del f} \over {\del y_j}} (\yy')  \ \div \ r  \right)  
\  \geq  \  p_j  .\]
\end{lemma}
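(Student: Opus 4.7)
My plan is to unpack Condition 3 into an infinitesimal ``no loss on each sliver'' statement, and then argue that this holds along every incremental build-up of the bundle $\yy$ if and only if the stated marginal inequality is satisfied at every weakly dominated point.

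First I would spell out the pricing. Since the buyer receives $r$ units of utility per dollar she pays, she is charged $1/r$ dollars per unit of utility she accrues. So if the middleman has already handed her the bundle $\yy'$ and now allocates an infinitesimal extra amount $dy_j$ of good $j$, her utility rises by $\frac{\del f}{\del y_j}(\yy')\,dy_j$ and she is therefore charged $\frac{1}{r}\frac{\del f}{\del y_j}(\yy')\,dy_j$, while the middleman owes the seller $p_j\,dy_j$. The ``no loss on this sliver'' requirement is exactly
\[ \frac{1}{r}\,\frac{\del f}{\del y_j}(\yy') \ \geq \ p_j. \]

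For the forward direction I would view $\yy$ as constructed from $\mathbf{0}$ along some monotone path in $\Rplus^g$; every $\yy' \leq \yy$ can be arranged to appear as an intermediate state of such a path (for example, first travel componentwise from $\mathbf{0}$ up to $\yy'$, then extend to $\yy$). If the stated inequality failed at some such $\yy'$ and some index $j$, continuous differentiability of $f$ would give a short segment emanating from $\yy'$ along which only $y_j$ increases and on which the marginal revenue $\frac{1}{r}\frac{\del f}{\del y_j}$ stays strictly below $p_j$; the portion of good $j$ sold along that segment would then be sold at a strict loss, contradicting Condition~3.

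For the reverse direction I would fix any monotone path from $\mathbf{0}$ to $\yy$ and decompose every portion of every good it sells into infinitesimal slivers at intermediate states $\yy'\leq\yy$; by hypothesis the pointwise inequality holds at all such $\yy'$, so each sliver of good $j$ collects revenue at least $p_j$ per unit, and integrating along the path via the fundamental theorem of calculus (valid because $f\in C^1$) shows that no portion is ever sold at a loss. The main obstacle I expect is making the informal phrase ``portion of a good sold at a loss'' precise; the $C^1$ assumption on $f$ is precisely what lets us reduce the question to pointwise marginal inequalities together with a one-variable integration along each coordinate axis.
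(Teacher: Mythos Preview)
Your approach is correct and coincides with the paper's: both identify the marginal price of good $j$ at an intermediate allocation $\yy'$ as $\frac{1}{r}\frac{\del f}{\del y_j}(\yy')$ and compare it to $p_j$ to characterize ``no loss on any portion.'' The paper's argument is terser---it simply derives this marginal-price expression and declares the biconditional, whereas you spell out both directions via monotone paths and continuity of the partials---but the substance is identical.
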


\begin{proof}
Conceptually, assume that the middleman is making an allocation to the buyer gradually and continuously and
is charging the buyer at the rate of $r$ units of utility per dollar. 
Clearly, the effective price at which he is selling her good $j$ depends on the allocation made already. If the latter is $\yy$,
then the {\em marginal price of good j at allocation} $\yy$ is
\[ {{\del f} \over {\del y_j}} (\yy)  \ \div \ r .\] 
Therefore, at this point the middleman is selling good $j$ at a loss iff 
\[ \left( {{\del f} \over {\del y_j}} (\yy)  \ \div \ r  \right)  \  <  \  p_j  ,\] 
since then he is charging the buyer less for good $j$ than the amount charged from him by the seller. 
The lemma follows.
\end{proof}

Let us say that an allocation $\yy$ is {\em feasible for rate r} if it satisfies the condition given in Lemma \ref{lem.condition}.
We next define the set of maximal, under the relation ``weakly dominates'', allocations that are feasible.
For $r > 0$, this set is:
\[ S(r) \ = \ \left\{ \yy \ | \ \left( {{\del f} \over {\del y_j}} (\yy)  \ \div \ p_j \right) \ = \ r \ 
\ \mbox{if} \ \ y_j > 0 \ \ \mbox{and} \ \leq r \ \mbox{otherwise} \right\} . \]
Observe that if $f$ is strictly concave, $S(r)$ will be a singleton for each $r$. 
The function $U: \Rplus \ra \Rplus$ gives the largest utility attained by a feasible allocation at rate $r$:
\[ U(r) = \sup \{ f(\yy)  \ | \ \yy \in S(r) \} . \]
Clearly, $U$ is a decreasing function of $r$.
Observe that because of the non-satiation condition, $\lim_{r \rightarrow 0} U(r)$ is unbounded.

Finally, we define rate $r^*$ as follows
\[ r^*  \   =  \  \arg \max_{r} \{ U(r) \geq m \cdot r \} , \]
where $m$ is the money of the generic buyer. 
Since function $U(r)$ is unbounded as $r \rightarrow 0$, $r^*$ is well defined for all $m$.

\begin{lemma}
\label{lem.rate}
$r^*$ maximizes the utility accrued by the generic buyer.
\end{lemma}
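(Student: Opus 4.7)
The plan is to show that for each rate $r > 0$, the best outcome for the buyer at that rate is the quantity $V(r) := \min\{U(r),\, m \cdot r\}$, and then show $V$ is maximized at $r^*$.

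First I would argue the $V(r)$ formula. At any rate $r$, a feasible transaction produces an allocation in $S(r)$ (the maximal allocations satisfying Condition 3, by Lemma \ref{lem.condition}), and the buyer is charged $f(\yy)/r$ dollars for the allocation $\yy$. The buyer wants to maximize $f(\yy)$, subject to $\yy \in S(r)$ and $f(\yy)/r \leq m$. The first constraint caps the attainable utility at $U(r)$; the second caps it at $m \cdot r$. Since the middleman is willing to sell at the chosen rate as long as no infinitesimal portion of a good is sold at a loss, the buyer can indeed attain $\min(U(r), m \cdot r)$ by taking either the utility-maximizing point of $S(r)$ (when $U(r) \leq m \cdot r$) or by stopping partway through a traversal of an allocation in $S(r)$ when her $m$ dollars are exhausted (when $U(r) > m \cdot r$).

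Next I would compare $V(r)$ with $V(r^*) = m \cdot r^*$. For $r \leq r^*$ this is immediate: $V(r) \leq m \cdot r \leq m \cdot r^* = V(r^*)$. For $r > r^*$, by the defining property of $r^*$ as the largest rate with $U(r) \geq m \cdot r$, we have $U(r) < m \cdot r$, so $V(r) = U(r)$. Combining the monotonicity of $U$ (noted in the paragraph just before the statement), we get $U(r) \leq U(r^*)$, and it remains to show $U(r^*) \leq m \cdot r^*$. Together with $U(r^*) \geq m \cdot r^*$ (from $r^*$ belonging to the feasible set), this will give the equality $U(r^*) = m \cdot r^*$ and the result.

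The main obstacle is the argument that $U(r^*) \leq m \cdot r^*$, which I would handle by contradiction: if $U(r^*) > m \cdot r^*$, then by continuity of $U$ at $r^*$ there exists some $r > r^*$ with $U(r) \geq m \cdot r$, contradicting the maximality of $r^*$. The only nontrivial ingredient here is the continuity of $U$, which I would derive from the continuous differentiability of $f$: as $r$ varies, the elements of $S(r)$ are characterized by equations and inequalities of the form $(\partial f/\partial y_j)(\yy) = r \, p_j$ (or $\leq r\, p_j$ if $y_j = 0$), and an implicit-function / Berge maximum-theorem argument shows that the achievable utility varies continuously with $r$. Once this is in place, the three cases above stitch together to give $V(r) \leq V(r^*) = m \cdot r^*$ for all $r$, proving that $r^*$ maximizes the generic buyer's utility.
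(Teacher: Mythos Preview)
Your argument is correct and follows the same two-case split as the paper: bound the buyer's utility by $m\cdot r$ when $r<r^*$ and by $U(r)\le U(r^*)$ when $r>r^*$, using the stated monotonicity of $U$. You are in fact more careful than the paper, which tacitly uses $U(r^*)=m\cdot r^*$ without comment; your continuity-of-$U$ sketch is the natural way to close that small gap, though it remains a sketch rather than a full proof.
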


\begin{proof}
Since $U(r)$ gives the maximum utility that the buyer can get from a feasible allocation at rate $r$, and since
this is a decreasing function of $r$, if the rate is fixed at $\bt > r^*$, then
the utility accrued by the buyer is $U(\bt) \leq U(r^*)$.
The maximum amount of utility the buyer can get for money $m$ at rate $r$ is $r \cdot m$. Therefore, if the rate is fixed at $\al < r_i$, then 
the utility accrued by $i$ is $\al \cdot m < r^* \cdot m$. Hence $r^*$ is the optimal rate.
\end{proof}

\section{The Convex/Nonlinear Program}
\label{sec.program}

The program (\ref{CP}) given below is a convex program for Case 1 utility functions and
simply a nonlinear program for Case 2 utility functions. Besides non-negativity, the only constraint is
that at most 1 unit of each good is sold. We will denote the Lagrange variables corresponding to these
constraints as $p_j$'s and will show that at optimality, they will be equilibrium prices of the corresponding market.

\begin{lp}
\label{CP}
\maximize & \sum_{i \in B} m_i \log (f_i(\xx_i))  \\[\lpskip]
\st       & \forall j\in G: \ \sum_{i\in B}  x_{ij} \leq 1  \nonumber \\
          & \forall i \in B, \ \forall j\in G: \ x_{ij}   \geq 0 \nonumber 
\end{lp}

The KKT conditions for this program are:

\begin{enumerate}
\item 
\[ \forall j \in G: \ \  p_j \geq 0 .\]

\item 
\[ \forall j \in G: \ \ p_j > 0  \ \  \implies \ \  \sum_{i\in B} x_{ij} = 1 .\]

\item 
\[ \forall i \in B, \ \forall j \in G: \ \ p_j \geq {m_i \over {f_i(\xx_i)}} \cdot {{\del f_i} \over {\del x_{ij}}} (\xx_i) .\]

\item 
\[ \forall i \in B, \ \forall j \in G: \ \ x_{ij} > 0 \ \ \implies \ \ p_j =
{m_i \over {f_i(\xx_i)}} \cdot {{\del f_i} \over {\del x_{ij}}} (\xx_i) . \]
\end{enumerate}

\begin{theorem}
\label{thm.eq}
For both cases of utility functions, the optimal primal and dual solutions to program (\ref{CP}) give equilibrium allocations and prices, and the 
latter are unique. Moreover, both can be computed to any required degree of accuracy in polynomial time.
\end{theorem}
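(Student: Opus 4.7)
The plan is to go via the KKT conditions. First, for Case 1 the program (\ref{CP}) is a convex program: $\log$ is concave and strictly increasing and each $f_i$ is concave, so each $\log f_i(\xx_i)$ is concave; the feasible region is a polytope with nonempty relative interior (e.g., $x_{ij}=1/(2n)$), so Slater's condition holds and KKT is both necessary and sufficient for a global optimum. For Case 2 the objective is not concave, but KKT still holds at a regular optimum as a necessary first-order condition, since the binding constraints are all linear.

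Second, I show every KKT point is an equilibrium. Set $r_i := f_i(\xx_i)/m_i$. Equilibrium Condition 1 is exactly KKT 1--2. Rewriting KKT 3 and 4 as $(\del f_i/\del x_{ij})(\xx_i)/p_j \leq r_i$ with equality whenever $x_{ij}>0$ shows $\xx_i \in S(r_i)$, so Lemma \ref{lem.condition} yields Condition 3. Condition 2 is automatic: at rate $r_i$ the utility $f_i(\xx_i)$ costs the buyer exactly $f_i(\xx_i)/r_i = m_i$ dollars.

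Third, I handle uniqueness of prices and polynomial-time computability together. For Case 1, given two optima $(\xx,\pp)$ and $(\xx',\pp')$, concavity of each $f_i$ together with strict concavity of $\log$ forces $f_i(\xx_i)=f_i(\xx_i')$ at every optimum (the classical Eisenberg--Gale argument applied to the midpoint allocation, which remains optimal by concavity of the objective), so all rates $r_i$ coincide; KKT condition 4 combined with complementary slackness then propagates this uniqueness through the bipartite incidence graph of positive allocations to yield $\pp=\pp'$. The program itself is solved to any desired accuracy by the ellipsoid method of \cite{GLS}, whose polynomial-time separation oracle for the polyhedral feasible region is immediate and whose Lagrange multipliers return $\pp$. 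For Case 2 the convex-combination step is weaker, since quasiconcavity only gives $f_i((\xx_i+\xx_i')/2)\geq \min(f_i(\xx_i),f_i(\xx_i'))$, so uniqueness must be argued directly from Lemma \ref{lem.rate}, which fixes each $r_i$ in terms of the prices. To solve the program, note that every upper level set $\{\xx_i: f_i(\xx_i)\geq t_i\}$ is convex, so treating $(t_1,\dots,t_n)$ as auxiliary variables and binary-searching on a target value for $\sum_i m_i \log t_i$ reduces (\ref{CP}) to a sequence of convex feasibility problems, each of which the ellipsoid method dispatches in polynomial time.

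The main obstacle I expect is Case 2. Without concavity of the objective, strong duality is not free, so neither ``KKT multipliers $=$ equilibrium prices'' nor ``these prices are unique'' follows from standard convex-optimization theorems; the reduction to convex feasibility also requires care because the set of achievable utility vectors is merely downward-closed and lattice-like, not convex. Both the equilibrium characterization and the uniqueness proof must therefore be re-established using the rate characterization of Lemma \ref{lem.rate} and the nonseparable quasiconcave structure of the $f_i$'s.
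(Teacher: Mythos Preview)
Your core strategy --- read off the three equilibrium conditions from the four KKT conditions and invoke the ellipsoid method --- is exactly what the paper does, and your derivation of Conditions~1--3 from KKT~1--4 matches the paper's almost line for line (the paper defines $r_i$ from the ratio $(\del f_i/\del x_{ij})/p_j$ at some $j$ with $x_{ij}>0$ and then \emph{derives} $m_i=f_i(\xx_i)/r_i$; you start from $r_i:=f_i(\xx_i)/m_i$ --- a cosmetic difference).

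Where you diverge from the paper is on uniqueness and on the Case~2 algorithm. The paper's uniqueness argument is a single sentence applied uniformly to both cases: because each $f_i$ is continuously differentiable, KKT condition~4 reads $p_j = (m_i/f_i(\xx_i))\cdot(\del f_i/\del x_{ij})(\xx_i)$ whenever $x_{ij}>0$, so the dual $\pp$ is determined by any optimal primal. There is no midpoint/Eisenberg--Gale argument, no case split, and no appeal to Lemma~\ref{lem.rate}. Your more elaborate route for Case~1 is correct but unnecessary, and for Case~2 you leave the argument unfinished (``must be argued directly from Lemma~\ref{lem.rate}'') where the paper's differentiability observation already closes it.

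There is a genuine gap in your Case~2 polynomial-time reduction. You introduce auxiliary variables $t_i$ and the constraints $f_i(\xx_i)\geq t_i$, then claim the resulting feasibility problems are convex. But the set $\{(\xx_i,t_i):f_i(\xx_i)\geq t_i\}$ is the hypograph of $f_i$, and this is convex if and only if $f_i$ is \emph{concave}; mere quasiconcavity gives convexity of each slice $\{\xx_i:f_i(\xx_i)\geq t_i\}$ for \emph{fixed} $t_i$, not of the joint set. So your ``sequence of convex feasibility problems'' is not convex in $(\xx,t)$, and the ellipsoid step does not go through as stated. The paper's treatment of this point is terser: it simply notes that for Case~2 the upper level sets are convex and that a separation oracle is therefore available for the ellipsoid method of \cite{GLS}, without introducing auxiliary $t_i$'s at all.
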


\begin{proof}
Because utility functions are assumed to be continuously differentiable, for an optimal solution to program (\ref{CP}) there is a unique
dual, i.e., prices, that satisfies the KKT conditions stated above. From these, we will derive the 3 conditions defining equilibrium.
For Case 1 utility functions, (\ref{CP}) is a convex program and for Case 2 utility functions, the upper level sets are convex, and
for both cases a separation oracle can be implemented in polynomial time. Moreover, since the constraints are all linear,
by \cite{GLS}, the optimal solutions can be computed in polynomial time to any required degree of accuracy.

The first equilibrium condition is implied by the KKT conditions 1 and 2.
Consider buyer $i$. Because of non-satiation, $x_{ij} > 0$ for some $j$. For this $j$, let
\[ r_i = \left( {{\del f_i} \over {\del x_{ij}}} (\xx_i) \ \div \ p_j \right) . \]
By KKT condition 4, any good $j$ with $x_{ij} > 0$ must satisfy this equality, and
if for some good $j$, $x_{ij} = 0$, then by KKT condition 3,
\[ r_i \geq \left( {{\del f_i} \over {\del x_{ij}}} (\xx_i) \ \div \ p_j \right) . \]
This proves that the middleman does not sell any part of a good at a loss. 
Substituting $r_i$ back in KKT condition 4, we get $m_i = r_i \cdot f_i(\xx_i)$, thereby proving that all money of 
buyer $i$ is spent and $r_i$ is the  rate whose existence is established in Lemma \ref{lem.rate}.
\end{proof}

\section{The Welfare Theorems}
\label{sec.welfare}

The {\em first welfare theorem}  states that allocations made at equilibrium prices are Pareto optimal and
the {\em second welfare theorem} states that for any Pareto optimal utilities $\uu^*$, there is a way of setting
the initial moneys of buyers in such a way that an equilibrium obtained for this instance gives precisely $\uu^*$
utilities to buyers.

\begin{theorem}
\label{thm.Pareto}
The first welfare theorem is satisfied by both cases of utility functions and the second welfare theorem is
satisfied by Case 1 utility functions.
\end{theorem}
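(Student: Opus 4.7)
For the first welfare theorem, I plan a very short argument that applies in both cases. Theorem \ref{thm.eq} identifies every equilibrium allocation $\xx$ with an optimal solution of program (\ref{CP}); that is, $\xx$ globally maximizes $\sum_{i \in B} m_i \log f_i(\xx_i)$ subject to the market's feasibility constraints. Since each $m_i > 0$, since non-satiation forces $f_i(\xx_i) > 0$ at any optimum, and since $\log$ is strictly increasing, any putative Pareto improvement $\xx'$ with $f_i(\xx'_i) \geq f_i(\xx_i)$ for all $i$ and a strict inequality for at least one $i$ would strictly increase the objective, contradicting optimality. This argument only uses monotonicity of the program's objective in each utility, so it goes through whether $f_i$ is concave or merely quasiconcave.

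For the second welfare theorem in Case 1, my plan is a Negishi-style supporting-hyperplane argument. First I would define the feasible utility set $U = \{\uu \in \Rplus^n : \exists \text{ feasible } \xx \text{ with } f_i(\xx_i) \geq u_i \text{ for all } i\}$ and note that, because each $f_i$ is concave and the feasibility region is a convex polytope, $U$ is convex: concavity lets a convex combination of allocations witnessing $\uu^1,\uu^2$ witness the same convex combination of utility vectors. Let $\xx^*$ attain the prescribed Pareto optimal utility vector $\uu^*$. Since $\uu^*$ lies on the upper boundary of the convex set $U$, the supporting hyperplane theorem yields non-negative weights $\lambda_i$, not all zero, such that $\xx^*$ maximizes the concave function $\sum_i \lambda_i f_i(\xx_i)$ over the feasibility polytope.

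Next I would set $m_i := \lambda_i \cdot f_i(\xx^*_i)$ and verify that $\xx^*$ is optimal for program (\ref{CP}) under these budgets. The key calculation is that at $\xx^*$ the gradient of (\ref{CP})'s objective equals $\sum_i (m_i / f_i(\xx^*_i)) \nabla f_i(\xx^*_i) = \sum_i \lambda_i \nabla f_i(\xx^*_i)$, which matches the gradient of $\sum_i \lambda_i f_i(\xx)$; since the feasibility constraints are the same linear ones, the KKT system of (\ref{CP}) at $\xx^*$ coincides with the KKT system of the weighted-sum program that $\xx^*$ already satisfies. Convexity of (\ref{CP}) in Case 1 makes KKT sufficient for global optimality, so $\xx^*$ is optimal for (\ref{CP}) under budgets $\mm$, and Theorem \ref{thm.eq} then certifies $\xx^*$ as the equilibrium allocation, yielding utilities exactly $\uu^*$.

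The main obstacle I anticipate is ruling out $\lambda_i = 0$ for some buyer, which would produce a zero budget and thus not correspond to a legitimate market instance. I expect to handle this by imposing a mild regularity hypothesis --- for example, $u_i^* > 0$ for every $i$ --- combined with non-satiation, which together should force every coordinate of the supporting hyperplane at $\uu^*$ to be strictly positive. It is also worth noting that the convexity of $U$ is exactly what collapses in Case 2 under mere quasiconcavity, which is consistent with the failure of the second welfare theorem established in Section \ref{sec.ex}.
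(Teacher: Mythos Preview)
Your proposal is correct and follows essentially the same approach as the paper: both derive the first welfare theorem from the observation that a Pareto improvement would strictly raise the objective of program~(\ref{CP}), and for the second welfare theorem both take a supporting hyperplane to the convex utility set, set $m_i = \lambda_i u_i^*$, and verify that $\uu^*$ then optimizes~(\ref{CP}). Your KKT-matching verification is a slightly more explicit rendering of the paper's geometric tangency claim, and the $\lambda_i = 0$ obstacle you flag is one the paper likewise leaves unaddressed.
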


\begin{proof}
If an allocation is not Pareto optimal, then it cannot maximize the objective function of program (\ref{CP}). Hence,
the first statement follows from Theorem \ref{thm.eq}.

Next assume that the utility functions are concave. Let $S$ be the image of the map from the polytope of feasible allocations
to utilities accrued. Because of concavity of the utility functions, $S$ is a convex region in $\Rplus^n$. Let $\uu^*$ be any Pareto 
optimal utilities. Clearly, $\uu^*$ must lie on the boundary of this convex region. Let $\sum_{i \in B} {a_i \cdot u_i} = c$ be the hyperplane 
that is tangent to region $S$ and contains the point $\uu^*$. By Pareto optimality of $\uu^*$, $a_i \geq 0$, for each $i \in B$,
with at least one of these values being positive.

Let $m_i = a_i \cdot u_i^*$. Find the largest value of $c'$ so that $\sum_{i \in B} {m_i \log{u_i}} = c'$ contains the point $\uu^*$.
Now, it is easy to see that for the chosen  values of $m_i$'s, $\sum_{i \in B} {m_i \log{u_i}} = c'$ must be tangent to $S$ at $\uu^*$.
Therefore, if the market is run with the money of each buyer $i$ set to $m_i$, the equilibrium utilities will be $\uu^*$, hence proving
the second welfare theorem.
\end{proof}

\subsection{The second welfare theorem fails for Case 2 utility functions}
\label{sec.ex}

We first give an example that uses continuous, quasiconcave utility functions, with one good and 2 buyers, 
for which the second welfare theorem fails. Let $0 < a < 1/2$ be a constant.
Define the utility function of each buyer for the good to be $f(x) = 2ax$, for $0 \leq x \leq 1/2$, and 
$f(x) = 2(1 - a)x + (2a - 1)$, for $1/2 \leq x \leq 1$.
Then, the region $S$ defined in the proof of Theorem \ref{thm.Pareto} is non-convex.

The point $\uu^* = (a, a)$ lies on the boundary of $S$ and hence constitutes Pareto optimal utilities.
Consider the hyperbola $u_1 u_2 = a^2$, which contains the point $\uu^*$.
It is easy to see, especially for small values of $a$, that this hyperbola is not tangent to region $S$. Indeed, because
of the non-convexity of $S$, for any value of $m_1, m_2$, the hyperbola $u_1^{m_1} u_2^{m_2} = a^{m_1 + m_2}$, which contains the point $\uu^*$,
is not tangent to $S$. Therefore, $\uu^*$ can never be equilibrium utilities of this market, for any value of $m_1, m_2$.
In particular, when run with money
$m_1 = m_2$, equilibrium utilities are not attained at $\uu^*$ because the objective function of convex program (\ref{CP}) is not maximized 
at this point (observe that despite the symmetry in this case, equilibrium utilities will be attained at a point $u_1 \neq u_2$).
Hence, the second welfare theorem fails for this example. Finally, by smoothening out, $f(x)$ can be made continuously differentiable at $x = 1/2$ 
without substantially changing $S$ or the conclusion derived above.

\bibliography{kelly}
\bibliographystyle{alpha}

\end{document}